\newcommand{\Problem}[1]{\textit{#1}}
\newcommand{\dProb}{\Problem{DQMax\#SAT}}
\newcommand{\dPmd}{\Problem{Max\#SAT}}
\newcommand{\dPq}{\Problem{QBF}}
\newcommand{\dPdq}{\Problem{DQBF}}
\newcommand{\dPssat}{\Problem{SSAT}}
\newcommand{\dPds}{\Problem{DSSAT}}
\newcommand{\Boolean}{\mathbb{B}}
\newcommand{\isdef}{\stackrel{def}{=}}
\newcommand{\BExp}[1]{{\cal F}\langle #1\rangle}
\newcommand{\MExp}[1]{{\cal M}\langle #1\rangle}
\newcommand{\counting}{\reflectbox{{\sf R}}}
\newcommand{\size}[1]{\left|{#1}\right|}
\newcommand{\cardinal}[2]{|{#1}|_{#2}}
\newcommand{\pars}[1]{\left(#1\right)}
\newcommand{\xvar}[2]{x'_{#1,#2}}
\newcommand{\Max}[2]{\max\nolimits^{#1}{#2}.~}
\newcommand{\Prob}[1]{\ensuremath{\mathbb{P}\left[#1\right]}}
\newcommand{\todo}[1]{\textcolor{blue}{[todo: #1]}}
\newcommand{\missingref}{\textcolor{brown}{[MISSING REF]}}
\newcommand{\retq}[1]{\textcolor{magenta}{[ret. ques. : #1]}}
\newcommand{\comm}[1]{}
\begin{document}

\raggedbottom{}

\title{Function synthesis for maximizing model counting}

\author{Thomas~Vigouroux\orcidID{0000-0001-6396-0285} \and Marius~Bozga\orcidID{0000-0003-4412-5684} \and 
  Cristian~Ene\orcidID{0000-0001-6322-0383} \and 
  Laurent~Mounier\orcidID{0000-0001-9925-098X}}
\authorrunning{T.~Vigouroux et al.}

\institute{Univ. Grenoble Alpes, CNRS,
  Grenoble INP\footnote{Institute of Engineering Univ. Grenoble Alpes},
  VERIMAG, 38000 Grenoble, France
  \email{\{name.surname\}@univ-grenoble-alpes.fr}
  \url{https://www-verimag.imag.fr/}
}

\maketitle


\begin{abstract}
  Given a boolean formula $\phi(X, Y, Z)$, the \dPmd{}
  problem~\cite{fremont2017maxcount,vigouroux2022baxmc} asks for finding a partial model on the set of variables $X$, maximizing its number of projected models over
  the set of variables $Y$. We investigate a strict generalization of \dPmd{} allowing
  dependencies for variables in $X$, effectively turning it into a synthesis problem.  We show
  that this new problem, called \dProb{}, subsumes both the \dPdq{}~\cite{peterson1979mpalt} and
  \dPds{}~\cite{lee2021dssat} problems.  We
  provide a general resolution method, based on a reduction to \dPmd{}, together with two
  improvements for dealing with its inherent complexity.  We further discuss a concrete application
  of \dProb{} for symbolic synthesis of adaptive attackers in the field of program security.
  Finally, we report preliminary results obtained on the resolution of benchmark problems using a
  prototype \dProb{} solver implementation.

  \keywords{Function synthesis \and Model counting \and   \dPmd{} \and \dPdq{} \and \dPds{}  \and Adaptive attackers.}
\end{abstract}

\section{Introduction}\label{sec:introduction}

A major concern in software security are active adversaries, i.e., adversaries that can \textit{interact} with a target program 
by feeding inputs. Moreover, these adversaries can often make observations about the program execution through side-channels and/or 
legal outputs. In this paper, we consider \textit{adaptive} adversaries, i.e., adversaries that choose their inputs by taking advantage 
of previous observations.

In order to get an upper bound of the insecurity of a given program with respect to this class of adversaries, a possible approach is to synthesize 
the \textit{best} adaptive attack strategy. 
This can be modelled as finding a function $A$ (corresponding to the adversarial strategy) satisfying some logical formula $\Phi$ (capturing some 
combination of attack objectives). Actually, this corresponds to a classical functional synthesis problem.

Informally, in our case, given a Boolean relation $\Phi$ between output variables (observables) and input variables (attacker provided), 
our goal is to synthesize each input variable as a function on preceeding outputs satisfying $\Phi$.
In the literature, this synthesis problem is captured by the so-called Quantified Boolean Formulae
({\it QBF}) satisfiability problem~\cite{garey1979npcomplete,garey1975graphcoloring} and its generalization, the Dependency Quantified Boolean Formulae
(\dPdq{}) satisfiability problem~\cite{peterson1979mpalt}.

These existing qualitative frameworks are not sufficient in a security context: 
we are not only interested by adversaries able to succeed \textit{in all cases}, but rather for adversaries succeeding with ``a good probability''.
The Stochastic SAT (\dPssat{}) problem~\cite{papadimitriou1985ssat} was therefore proposed and replaces the classical
universal (resp. existential) quantifiers by \textit{counting} (resp. \textit{maximizing}) quantifiers.
This corresponds to finding the optimal inputs, depending on preceeding outputs, that maximize the number of models of $\Phi$,
hence the succeeding probability of the attack.  More recently, the Dependency Stochastic SAT (\dPds) problem~\cite{lee2021dssat} has been proposed as 
a strict generalization of the \dPssat{} problem by allowing explicit dependencies for maximizing variables,
in a similar way the \dPdq{} problem generalizes the \dPq{} problem.

Nonetheless, an additional complication is hindering the use of
quantitative stochastic frameworks in our security context.  In general,
the output variables in a program may hold expressions computed from one or
more secret variables.  Consequently, they rarely translate as
counting variables in a stochastic formula.  Most likely, the above-mentioned secret
variables translate into counting variables whereas the
observable variables need to be projected out when counting the models.
Yet, the output variables are mandatory to express the knowledge available
and the dependencies for synthesizing the attacker's optimal inputs.


As an example, we are interested in solving counting problems of the form:
\begin{multline*}
  \Max{\{z_1\}}{x_1} \Max{\{z_2\}}{x_2} \counting y_1.~ \counting y_2.~ \exists z_1.~ \exists z_2.~ \\
  (x_1 \Rightarrow y_2) \land (y_1 \Rightarrow x_2) \land (y_1 \lor z_2 \Leftrightarrow y_2 \land z_1)
\end{multline*}
which involve three distinct types of quantified variables and which are interpreted as follows:
synthesize for $x_1$ (respectively $x_2$) a boolean expression
$e_1$ (respectively  $e_2$), depending only on $z_1$ (respectively $z_2$), such that the formula
obtained after replacing $x_i$ by $e_i$ has a maximal number of  models \textit{projected} on the counting
variables $y_1, y_2$.

Notice that this problem generalizes in a non-trivial way three well-known existing problems:
\begin{inparaenum}[(i)]
  \item it generalizes the \dPmd{} problem~\cite{fremont2017maxcount,vigouroux2022baxmc} by allowing
        the maximizing variables to depend \emph{symbolically} on other variables;
  \item it lifts  the  \dPdq{} problem~\cite{peterson1979mpalt} to a quantitative problem, we do not
        want to check if there exist expressions $e_i$ working for all $y_1, y_2$, but to find
        expressions $e_i$ maximizing the number of models on $y_1, y_2$;
  \item it extends the \dPds{} problem~\cite{lee2021dssat} with the additional category of \emph{existential}
  variables, which can occur in the dependencies of maximizing variables, but which are projected for model counting.
\end{inparaenum}

Our contributions are the following:
\begin{itemize}
  \item We introduce formally  the \dProb{} problem as a new problem that arises naturally in the
        field of software security, and we show that it subsumes the \dPmd{}, \dPdq{} and \dPds{} problems.
  \item We develop a general  resolution method based on a reduction to \dPmd{} and further propose two improvements in order to deal with its inherent complexity:
    \begin{inparaenum}[(i)]
    \item an incremental method, that enables anytime resolution;
    \item a local method, allowing to split the initial problem into independent smaller
          sub-problems, enabling parallel resolution.
    \end{inparaenum}
  \item We provide two applications of  \dProb{} to software security:  we show that
        \emph{quantitative robustness} \cite{bardin2022quantreach} and  \emph{programs as
        information leakage-channels} \cite{smith2009foundations,phan2017sidechanatk}  can be
        systematically cast as instances of the \dProb{} problem.
  \item We provide a first working prototype solver for the \dProb{} problem and we apply it to the
        examples considered in this paper.
\end{itemize}





The paper is organized as follows. \cref{sec:problem} introduces formally the
\dProb{} problem and its relation with the \dPmd{}, \dPdq{} and \dPds{} problems.
\cref{sec:global,sec:incremental,sec:local} present the three different approaches we propose for
solving \dProb{}.  \cref{sec:appsec} shows concrete applications of \dProb{} in software
security, that is, for the synthesis of adaptive attackers.  Finally, \cref{sec:bench} provides
preliminary experimental results obtained with our prototype \dProb{} solver.
\cref{sec:relwork} discusses some references to related work and \cref{sec:conclusions} concludes
and proposes some extensions to address in the future.

\section{Problem statement}\label{sec:problem}

\subsection{Preliminaries}

Given a set $V$ of Boolean variables, we denote by $\BExp{V}$ (resp. $\MExp{V}$) the set of
Boolean formulae (resp. complete monomials) over $V$.  A model of a boolean formula $\phi \in
\BExp{V}$ is an assignement $\alpha_V : V \rightarrow \Boolean$ of
variables to Boolean values such that $\phi$ evaluates to $\top$ (that is,
$\mathit{true}$) on $\alpha_V$, it is denoted by $\alpha_V \models \phi$.  A formula is satisfiable if it has
at least one model $\alpha_V$.  A formula is valid (i.e., tautology) if any
assignement $\alpha_V$ is a model.  

Given a formula $\phi \in \BExp{V}$ we denote by $\cardinal{\phi}{V}$ the
number of its models, formally $\cardinal{\phi}{V} \isdef \size{\{\alpha_V
  : V \rightarrow \Boolean ~|~ \alpha_V \models \phi \}}$.  For a
partitioning $V = V_1 \uplus V_2$ we denote by $\cardinal{\exists
  V_2.~\phi}{V_1}$ the number of its $V_1$-projected models, formally
$\cardinal{ \exists V_2.~\phi }{V_1} \isdef \size{\{ \alpha_{V_1} : V_1
  \rightarrow \Boolean ~|~ \exists \alpha_{V_2} : V_2\rightarrow \Boolean.~
  \alpha_{V_1} \uplus \alpha_{V_2} \models \phi \}}$.  Note that
in general $\cardinal{\exists V_2.~\phi}{V_1} \le \cardinal{\phi}{V}$ with
equality only in some restricted situations (e.g. when $V_1$ is an independent support of the
formula~\cite{ChakrabortyMV14independent}).

Let $V$, $V'$, $V''$ be arbitrary sets of Boolean variables.  Given a
Boolean formula $\phi \in \BExp{V}$ and a substitution $\sigma : V'
\rightarrow \BExp{V''}$ we denote by $\phi[\sigma]$ the Boolean formula in
$\BExp{(V \setminus V') \cup V''}$ obtained by replacing in $\phi$ all
occurrences of variables $v'$ from $V'$ by the associated formula
$\sigma(v')$.

\subsection{Problem Formulation}

\begin{definition}[\dProb{} problem] \label{def:problem}
Let $X=\{x_1,...,x_n\}$, $Y$, $Z$ be pairwise disjoint finite sets
of Boolean variables, called respectively \emph{maximizing},
\emph{counting} and \emph{existential} variables.  The \dProb{} problem is
specified as:
\begin{equation} \label{eq:problem}
  \max\nolimits^{H_1}x_1.~ ...~ \max\nolimits^{H_n} x_n.~
  \counting Y.~ \exists Z.~ \Phi(X,Y,Z)
\end{equation}
where $H_1, ..., H_n \subseteq Y \cup Z$ and $\Phi \in \BExp{X \cup Y \cup
  Z}$ are respectively the \emph{dependencies} of maximizing variables and
the \emph{objective} formula.
\end{definition}
The solution to the problem is a substitution $\sigma^*_X : X \rightarrow \BExp{Y \cup Z}$ associating
formulae on counting and existential variables to maximizing variables such
that
\begin{inparaenum}[(i)]
  \item $\sigma^*_X(x_i) \in \BExp{H_i}$, for all $i \in [1,n]$ and
  \item $\cardinal{\exists Z.~ \Phi[\sigma^*_X]}{Y}$ is maximal.
\end{inparaenum}
That means, the chosen substitution conforms to dependencies on maximizing
variables and guarantees the objective holds for the largest number of
models projected on the counting variables.

\begin{example}\label{ex:running1}
  Consider the problem:
  \[
    \max\nolimits^{\{z_1, z_2\}}x_1.~ \counting y_1.~ \counting y_2.~ \exists z_1.~ \exists z_2.~
    (x_1 \Leftrightarrow y_1) \land (z_1 \Leftrightarrow y_1 \lor y_2) \land (z_2 \Leftrightarrow y_1 \land y_2)
  \]
  Let $\Phi$ denote the objective formula. In this case, $\BExp{\{z_1, z_2\}} = \{\top,
  \bot, z_1, \overline{z_1}, z_2, \overline{z_2}, z_1 \lor z_2, \overline{z_1} \lor z_2, z_1 \lor
  \overline{z_2}, \overline{z_1} \lor \overline{z_2}, z_1 \land z_2, \overline{z_1} \land z_2, z_1
  \land \overline{z_2}, \overline{z_1} \land \overline{z_2}, z_1 \Leftrightarrow z_2, \overline{z_1 \Leftrightarrow z_2} \}$, and one shall consider every
  possible substitution.  One can compute for instance 
  $\Phi[x_1 \mapsto \overline{z_1} \land \overline{z_2}] \equiv((\overline{z_1} \land
  \overline{z_2}) \Leftrightarrow y_1) \land (z_1 \Leftrightarrow y_1 \lor y_2) \land (z_2 \Leftrightarrow y_1 \land y_2)$
  which only has one model ($\{y_1 \mapsto \bot, y_2 \mapsto \top, z_1 \mapsto \top, z_2 \mapsto
  \bot\}$) and henceforth $\cardinal{\exists z_1.~ \exists z_2.~
  \Phi[x_1 \mapsto \overline{z_1} \land \overline{z_2}]}{\{y_1, y_2\}} = 1$.
  Overall, for this problem there exists four possible maximizing substitutions
  $\sigma^*$ respectively $x_1 \mapsto z_1$, $x_1 \mapsto z_2$, $x_1 \mapsto z_1 \lor z_2$, $x_1 \mapsto z_1 \land z_2$ such that for all of them $\cardinal{\exists z_1.~\exists z_2.~\Phi[\sigma^*]}{\{y_1, y_2\}} = 3$.
\end{example}

\begin{example} \label{ex:running2}
  Let us consider the following problem:
  \begin{multline*}
    \Max{\{z_1\}}{x_1} \Max{\{z_2\}}{x_2} \counting y_1.~ \counting y_2.~ \exists z_1.~ \exists z_2.~ \\
    (x_1 \Rightarrow y_2) \land (y_1 \Rightarrow x_2) \land (y_1 \lor z_2 \Leftrightarrow y_2 \land z_1)
  \end{multline*}
   \noindent Let $\Phi$ denote the associated objective formula.
   An optimal solution is $x_1 \mapsto \bot, x_2 \mapsto \overline{z_2}$ and one can check that
   $\cardinal{\exists z_1.~ \exists z_2.~ \Phi[x_1 \mapsto \bot, x_2 \mapsto \overline{z_2}]}{\{y_1,
   y_2\}} = 3$. Moreover,  on can notice that there do not exist expressions $e_1 \in
   \BExp{\{z_1\}}$ (respectively $e_2 \in  \BExp{\{z_2\}}$),  such that $\exists z_1.~ \exists z_2.~
   \Phi[x_1 \mapsto e_1, x_2 \mapsto e_2]$ admits the model $y_1 \mapsto \top, y_2 \mapsto \bot$.
\end{example}

The following proposition provides an upper bound on the number of models
corresponding to the solution of (\ref{eq:problem}) computable using
projected model counting.

\begin{proposition}\label{prop:upper-bound}
  For any substitution $\sigma_X : X \rightarrow \BExp{Y \cup Z}$ it holds
  \[ \cardinal{\exists Z.~ \Phi[\sigma_X]}{Y} \le \cardinal{ \exists X.~\exists Z.~\Phi}{Y}.\]
\end{proposition}

\subsection{Hardness of \dProb{}}

We briefly discuss now the relationship between the \dProb{} problem and the \dPmd{},
\dPdq{} and \dPds{} problems.  It turns out that \dProb{} is at least as hard as all of them, as illustrated by the
following reductions.

\subsubsection{\dProb{} is at least as hard as \dPmd{}:}
Let $X = \{x_1,...,x_n\}$, $Y$, $Z$ be pairwise disjoint finite
sets of Boolean variables, called \emph{maximizing},
\emph{counting} and \emph{existential} variables.  The \dPmd{} problem
\cite{fremont2017maxcount} specified as
\begin{equation} \label{md:problem}
  \max x_1.~ \ldots \max x_n.~ \counting Y.~  \exists Z.~ \Phi(X,Y,Z)
\end{equation}
asks for finding an assignement $\alpha^*_X : X \rightarrow \Boolean$
of maximizing variables to Boolean values such that $\cardinal{\exists Z.~
  \Phi[\alpha^*_X]}{Y}$ is maximal.  It is immediate to see that the \dPmd{}
problem is the particular case of the \dProb{} problem where there are no
dependencies, that is, $H_1 = H_2 = ... = H_n = \emptyset$.

\subsubsection{\dProb{} is at least as hard as \dPdq{}:}
Let $X= \{x_1,...,x_n\}$, $Y$ be disjoint finite sets of
Boolean variables and let $H_1, ..., H_n \subseteq Y$.  The \dPdq{} problem
\cite{peterson1979mpalt} asks, given a \dPdq{} formula:
\begin{equation} \label{dq:problem}
\forall Y.~ \exists^{H_1} x_1.~ ...~ \exists^{H_n} x_n.~  \Phi(X,Y)
\end{equation}
to synthesize a substitution $\sigma^*_X : X \rightarrow \BExp{Y}$ whenever
one exists such that
\begin{inparaenum}[(i)]
\item $\sigma^*_X(x_i) \in \BExp{H_i}$, for all $i \in [1,n]$ and
\item $\Phi[\sigma^*_X]$ is valid.
\end{inparaenum}
The \dPdq{} problem is reduced to the \dProb{} problem:
\begin{equation} \label{eqdq:problem}
  \max\nolimits^{H_1}x_1.~ ...~ \max\nolimits^{H_n} x_n.~\counting Y.~ \Phi(X,Y)
\end{equation}
By solving \eqref{eqdq:problem} one can solve the initial \dPdq{} problem
\eqref{dq:problem}.  Indeed, let $\sigma^*_X : X \rightarrow \BExp{Y}$ be a
solution for \eqref{eqdq:problem}.  Then, the \dPdq{} problem admits a
solution if and only if $\cardinal{\Phi[\sigma^*_X]}{Y} = 2^{|Y|}$.
Moreover, $\sigma^*_X$ is a solution for the problem
\eqref{dq:problem} because
\begin{inparaenum}[(i)]
\item $\sigma^*_X$ satisfies dependencies and
\item $\Phi[\sigma^*_X]$ is valid as it belongs to $\BExp{Y}$ 
  and has $2^{|Y|}$ models.
\end{inparaenum}
Note that through this reduction of \dPdq{} to \dProb{}, the maximizing quantifiers in \dProb{} can
be viewed as Henkin quantifiers~\cite{henkin1965quantifier} in \dPdq{} with a quantitative flavor.

\subsubsection{\dProb{} is at least as hard as \dPds{}:} \label{sec:hard:dssat}
Let $X = \{x_1, ..., x_n\}$, $Y = \{y_1, ..., y_m\}$ be disjoint finite sets of
variables.  A \dPds{} formula is of the form:
\begin{equation} \label{eq:dssatdef}
  \max\nolimits^{H_1}x_1.~ ...~ \max\nolimits^{H_n} x_n.
  ~\counting^{p_1} y_1.~ ...~ \counting^{p_m} y_m.~ \Phi(X,Y)
\end{equation}
where $p_1, ..., p_m \in [0, 1]$ are respectively the probabilities of
variables $y_1,..., y_m$ to be assigned $\top$ and $H_1, ...,H_n \subseteq
Y$ are respectively the dependency sets of variables $x_1,..., x_n$.
Given a \dPds{} formula (\ref{eq:dssatdef}), the probability of an
assignement $\alpha_Y:Y \rightarrow \Boolean$ is defined as
\[
  \Prob{\alpha_Y} \isdef \prod_{i=1}^m \begin{cases}
    p_i & \text{if } \alpha_Y(y_i) = \top \\
    1 - p_i & \text{if } \alpha_Y(y_i) = \bot \\
  \end{cases}
\]
This definition is lifted to formula $\Psi \in \BExp{Y}$ by summing up the
probabilities of its models, that is, $\Prob{\Psi} \isdef \sum_{\alpha_Y \models
  \Psi} \Prob{\alpha_Y}$.

The \dPds{} problem~\cite{lee2021dssat} asks, for a given formula (\ref{eq:dssatdef}),
to synthesize a substitution $\sigma^*_X : X \rightarrow \BExp{Y}$ such that
\begin{inparaenum}[(i)]
\item $\sigma^*_X(x_i) \in \BExp{H_i}$, for all $i \in [1,n]$ and
\item $\Prob{\Phi[\sigma^*_X]}$ is maximal.
\end{inparaenum}
If $p_1 = ... = p_m = \frac{1}{2}$ then for any substitution $\sigma_X : X
\rightarrow \BExp{Y}$ it holds $\Prob{\Phi[\sigma_X]} =
\frac{\cardinal{\Phi[\sigma_X]}{Y}}{2^m}$.  In this case, it is
immediate to see that solving (\ref{eq:dssatdef}) as a \dProb{} problem
(i.e., by ignoring probabilities) would solve the original \dPds{} problem.
Otherwise, in the general case, one can use existing techniques such as
\cite{chakraborty2015unweight} to transform arbitrary \dPds{} problems
(\ref{eq:dssatdef}) into equivalent ones where all probabilities are
$\frac{1}{2}$ and solve them as above.

Note that while the reduction above from \dPds{} to \dProb{} seems to
indicate the two problems are rather similar, a reverse reduction from
\dProb{} to \dPds{} seems not possible in general. That is, recall that
\dProb{} allows for a third category of \emph{existential} variables $Z$
which can occur in the dependencies sets $H_i$ and which are not used for
counting but are projected out. Yet, such problems arise naturally in our
application domain as illustrated later in
section~\ref{sec:appsec}.  If no such existential variables exists or if
they do not occur in the dependencies sets then one can apriori project
them from the objective $\Phi$ and syntactically reduce \dProb{} to \dPds{}
(i.e., adding $\frac{1}{2}$ probabilities on counting variables).  However,
projecting existential variables in a brute-force way may lead to an
exponential blow-up of the objective formula $\Phi$, an issue already
explaining the hardness of projected model counting vs model counting
\cite{aziz2015projmc,lagniez2019projmc}.  Otherwise, in case of dependencies on
existential variables, it is an open question if any direct reduction exists
as these variables do not fit into the two categories of variables
(counting, maximizing) occurring in \dPds{} formula.

\section{Global method}\label{sec:global}
We show in this section that the \dProb{} problem can be directly reduced to
a \dPmd{} problem with an exponentially larger number of maximizing
variables and exponentially bigger objective formula.

First, recall that any boolean formula $\varphi \in \BExp{H}$ can be
written as a finite disjunction of a subset $M_\varphi$ of complete
monomials from $\MExp{H}$, that is, such that the following
equivalences hold:
\[
  \varphi ~\Longleftrightarrow~ \vee_{m \in M_\varphi} m
~\Longleftrightarrow~ \vee_{m \in \MExp{H}} ([\![m \in M_\varphi]\!]
\wedge m)
\]
Therefore, any formula $\varphi \in \BExp{H}$ is uniquely
\emph{encoded} by the set of boolean values $[\![m \in M_\varphi]\!]$
denoting the membership of each complete monomial $m$ to $M_\varphi$.
We use this idea to encode the substitution of a maximizing variable
$x_i$ by some formula $\varphi_i \in \BExp{H_i}$ by using a set of
boolean variables $(\xvar{i}{m})_{m\in \MExp{H_i}}$ denoting
respectively $[\![m \in M_{\varphi_i}]\!]$ for all
$m \in \MExp{H_i}$.
We now define the following \dPmd{} problem:
\begin{multline} \label{eq:md:problem}
    (\max \xvar{1}{m}.)_{m \in \MExp{H_1}}~ ... ~(\max \xvar{n}{m}.)_{m \in \MExp{H_n}} ~ \counting Y.~ \exists Z.~ \exists X.~ \\
    \Phi(X,Y,Z) \wedge \bigwedge_{i\in[1,n]} \left(x_i \Leftrightarrow \vee_{m \in \MExp{H_i}} (\xvar{i}{m} \wedge m)\right)
\end{multline}
The next theorem establishes the relation between the two problems.
\begin{theorem} \label{thm:reduce}
$\sigma^*_X = \{x_i \mapsto \varphi^*_i\}_{i \in [1,n]}$ is a solution to the
problem \dProb{} (\ref{eq:problem}) if and only if $\alpha^*_{X'} = \{ \xvar{i}{m}
\mapsto [\![ m \in M_{\varphi_i^*} ]\!] \}_{i\in [1,n], m\in \MExp{H_i}}$ is a solution
to \dPmd{} problem \eqref{eq:md:problem}.
\end{theorem}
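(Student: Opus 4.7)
The plan is to establish the theorem via a natural bijection between admissible substitutions $\sigma_X$ in the \dProb{} problem~\eqref{eq:problem} and Boolean assignments $\alpha_{X'}$ to the encoding variables $X'=\{\xvar{i}{m}\}_{i\in[1,n],\, m\in\MExp{H_i}}$, and then to show that the counting objective is preserved pointwise under this bijection; optimality will then transport across it.

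First, I would formalize the bijection. As recalled in the paper, every $\varphi_i\in\BExp{H_i}$ is uniquely written $\varphi_i \equiv \vee_{m\in\MExp{H_i}} ([\![m\in M_{\varphi_i}]\!]\wedge m)$, and conversely any Boolean vector $(b_m)_{m\in\MExp{H_i}}$ determines the formula $\vee_{b_m=\top} m$. This yields an explicit bijection $\{x_i\mapsto\varphi_i\}_{i\in[1,n]}\leftrightarrow\bigl(\xvar{i}{m}\mapsto[\![m\in M_{\varphi_i}]\!]\bigr)_{i,m}$ between admissible substitutions and assignments of $X'$.

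Next comes the technical heart of the proof, the count-preservation identity: for every admissible $\sigma_X=\{x_i\mapsto\varphi_i\}_i$ with encoding $\alpha_{X'}$,
\[
\cardinal{\exists Z.~\Phi[\sigma_X]}{Y}
\;=\;
\cardinal{\exists Z.~\exists X.~\Phi(X,Y,Z)\wedge\bigwedge_{i\in[1,n]}\!\pars{x_i\Leftrightarrow\vee_{m\in\MExp{H_i}}(\xvar{i}{m}\wedge m)}[\alpha_{X'}]}{Y}.
\]
I would argue this pointwise in $(Y,Z)$. Because $\MExp{H_i}$ partitions the assignments of $H_i$, for any fixed $(Y,Z)$ exactly one complete monomial $m^\star\in\MExp{H_i}$ evaluates to $\top$; hence $\vee_{m}(\xvar{i}{m}\wedge m)$ evaluates to $\alpha_{X'}(\xvar{i}{m^\star})=[\![m^\star\in M_{\varphi_i}]\!]=\varphi_i(Y,Z)$. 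The equivalence $x_i\Leftrightarrow\vee_m(\xvar{i}{m}\wedge m)$ therefore behaves as a definition that fixes $x_i$ to $\varphi_i(Y,Z)$, so the $\exists X$ on the right collapses to the substitution $[\sigma_X]$, matching the left.

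Finally, since the bijection is a value-preserving map on the counting objective, the maxima are in correspondence: $\sigma^\ast_X$ achieves the \dProb{} optimum iff $\alpha^\ast_{X'}$ achieves the \dPmd{} optimum of~\eqref{eq:md:problem}, giving both implications of the theorem. I expect the main obstacle to be stating the count-preservation identity precisely enough to justify the pointwise argument, in particular showing that the encoding constraint functions as a total definition of each $x_i$ in terms of $(Y,Z)$ once $\alpha_{X'}$ is fixed; everything else is routine transport of optima along a bijection.
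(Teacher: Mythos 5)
Your proposal is correct and follows essentially the same route as the paper's proof: both reduce the theorem to the single key equivalence $\Phi[\sigma_X] \Leftrightarrow (\exists X.~\Phi')[\alpha_{X'}]$ and then transport optimality across the monomial-set encoding. Your version merely adds useful detail the paper leaves implicit, namely the pointwise argument that exactly one complete monomial of $\MExp{H_i}$ is true under any fixed $(Y,Z)$-assignment (so the encoding constraint acts as a total definition of $x_i$) and the explicit surjectivity of the encoding needed for the ``only if'' direction.
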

\begin{proof}
  Let us denote
  \[
    \Phi'(X', X, Y, Z) \isdef \Phi(X,Y,Z) \wedge \bigwedge_{i\in[1,n]} \left(x_i \Leftrightarrow
  \vee_{m \in \MExp{H_i}} (\xvar{i}{m} \wedge m)\right)
\]
Actually, for any $\Phi \in \BExp{X \cup Y \cup Z}$
for any $\varphi_1 \in \BExp{H_1}, ..., \varphi_n \in \BExp{H_n}$ the
following equivalence is valid:
\begin{multline*}
\Phi(X,Y,Z)[\{x_i \mapsto \varphi_i\}_{i\in [1,n]}] \Leftrightarrow \\ \left(\exists X.~
  \Phi'(X', X,Y,Z)\right) \left[ \{\xvar{i}{m} \mapsto [\![ m \in M_{\varphi_i}]\!]\}_{i \in [1, n],  m \in \MExp{H_i}}\right] 
\end{multline*}
Consequently, finding the substitution $\sigma_X$ which maximize the
number of $Y$-models of the left-hand side formula (that is, of $\exists
Z.~\Phi(X,Y,Z)$) is actually the same as finding the valuation
$\alpha_{X'}$ which maximizes the number of $Y$-models of the right-hand
side formula (that is, $\exists Z.~\exists
X.~ \Phi'(X', X, Y,Z)$). \squareforqed \end{proof}

\begin{example}\label{ex:global}
  \cref{ex:running1} is reduced to the following:
  \begin{multline*}
  \max \xvar{1}{z_1z_2}.~\max \xvar{1}{z_1\overline{z_2}}.~\max \xvar{1}{\overline{z_1}z_2}.~\max \xvar{1}{\overline{z_1}\overline{z_2}}.~ \counting y_1.~ \counting y_2.~ \exists z_1.~ \exists z_2.~\exists x_1.~ \\
  \shoveleft{(x_1 \Leftrightarrow y_1) \land (z_1 \Leftrightarrow y_1 \lor y_2) \land (z_2 \Leftrightarrow y_1 \land y_2) \land} \\
    (x_1 \Leftrightarrow \pars{\pars{\xvar{1}{z_1z_2} \wedge z_1 \wedge z_2} \vee
    \pars{\xvar{1}{z_1 \overline{z_2}} \wedge z_1 \wedge \overline{z_2}} \vee
    \pars{\xvar{1}{\overline{z_1} z_2} \wedge \overline{z_1} \wedge z_2} \vee
    \pars{\xvar{1}{\overline{z_1}\overline{z_2}} \wedge \overline{z_1} \wedge \overline{z_2}}})
  \end{multline*}
  One possible answer is $\xvar{1}{z_1 z_2} \mapsto \top, \xvar{1}{z_1 \overline{z_2}} \mapsto
  \top, \xvar{1}{\overline{z_1}z_2} \mapsto \bot, \xvar{1}{\overline{z_1}\overline{z_2}} \mapsto \bot$. This yields
  the solution $\sigma_X(x_1) = \pars{z_1 \land z_2} \lor \pars{z_1 \land \overline{z_2}} = z_1$ which is one of the
  optimal solutions as explained in \cref{ex:running1}.
\end{example}


\section{Incremental method}\label{sec:incremental}
In this section we propose a first improvement with respect to the reduction in the previous
section. It allows to control the blow-up of the objective formula in the reduced
\dPmd{} problem through an incremental process. Moreover, it allows in practice to find earlier good
approximate solutions.

The incremental method consists in solving a sequence of related \dPmd{}
problems, each one obtained from the original \dProb{} problem and a reduced
set of dependencies $H_1' \subseteq H_1$, \dots, $H_n' \subseteq H_n$.
Actually, if the sets of dependencies $H_1'$, \dots, $H_n'$ are chosen such
that to augment progressively from $\emptyset$, \dots, $\emptyset$ to $H_1$, \dots, $H_n$ by
increasing only one of $H_i'$ at every step then
\begin{inparaenum}[(i)]
\item it is possible to build every such \dPmd{} problem from the previous one
by a simple syntactic transformation and
\item most importantly, it is possible to steer the search for its solution knowing the solution of the previous one.
\end{inparaenum}

The incremental method relies therefore on an oracle procedure {\tt
max\#sat} for solving \dPmd{} problems.  We assume this procedure takes as
inputs the sets $X$, $Y$, $Z$ of maximizing, counting and existential
variables, an objective formula $\Phi \in \BExp{X \cup Y \cup Z}$, an
initial assignment $\alpha_0 : X \rightarrow \Boolean$ and a filter formula
$\Psi \in \BExp{X}$.  The last two parameters are essentially used to
restrict the search for maximizing solutions and must satisfy:
\begin{itemize}
\item $\Psi[\alpha_0] = \top$, that is, the initial assignment $\alpha_0$ is a model of $\Psi$ and 
\item forall $\alpha : X \rightarrow \Boolean$ if $\alpha \nvDash \Psi$ then $\cardinal{\exists
      Z.~\Phi[\alpha]}{Y} \le \cardinal{\exists Z.~\Phi[\alpha_0]}{Y}$, that is, any assignment
      $\alpha$ outside the filter $\Psi$ is at most as good as the assignement $\alpha_0$.
\end{itemize}
Actually, whenever the conditions hold, the oracle can safely restrict the search for the
optimal assignements within the models of $\Psi$.  The oracle produces as
output the optimal assignement $\alpha^* : X \rightarrow \Boolean$ solving
the \dPmd{} problem.

The incremental algorithm proposed in \cref{alg:incremental} proceeds as follows:
\begin{itemize}
  \item at lines 1-5 it prepares the arguments for the first call of the \dPmd{} oracle, that is,
        for solving the problem where $H_1' = H_2' = ... = H_n' = \emptyset$,
  \item at line 7 it calls to the \dPmd{} oracle, 
  \item at lines 9-10 it chooses an index $i_0$ of some dependency set $H_i' \not= H_i$ and a
    variable $u \in H_{i_0} \setminus H'_{i_0}$ to be considered in addition for the next step,
  \item at lines 11-19 it prepares the argument for the next call of the \dPmd{} oracle, that
    is, it updates the set of maximizing variables $X'$, it refines the objective formula
    $\Phi'$, it defines the new initial assignement $\alpha_0'$ and the new filter $\Psi'$ using
    the solution of the previous problem,
  \item at lines 6,20,22 it controls the main iteration, that is, keep going as long as sets
    $H_i'$ are different from $H_i$,
  \item at line 23 it builds the expected solution, that is, convert the Boolean solution
    $\alpha'^*$ of the final \dPmd{} problem where $H_i' = H_i$ for all $i\in [1,n]$ to the
    corresponding substitution $\sigma_X^*$.
\end{itemize}

\begin{algorithm}[t]
  \DontPrintSemicolon
  \SetKwInOut{Input}{input}
  \SetKwInOut{Output}{output}
  \SetKwFunction{Choose}{choose}
  \SetKwFunction{MaxDSat}{max\#sat}
  \SetKw{KwReturn}{return}
  \Input{$X = \{x_1,...,x_n\}$, $Y$, $Z$, $H_1$, ..., $H_n$, $\Phi$ }
  \Output{$\sigma_X^*$}
  \BlankLine
  $H_i' \leftarrow \emptyset$ for all $i\in[1,n]$ \;
  $X' \leftarrow \{ \xvar{i}{\top} \}_{i\in[1,n]}$ \;
  $\Phi' \leftarrow \Phi \wedge \bigwedge\nolimits_{i\in[1,n]} (x_i \Leftrightarrow \xvar{i}{\top}) $ \;
  $\alpha_0' \leftarrow \{\xvar{i}{\top} \mapsto \bot\}_{i\in[1,n]}$ \;
  $\Psi' \leftarrow \top$ \;
  \Repeat{$H_i' = H_i$ for all $i\in[1,n]$}{
  $\alpha'^* \leftarrow \MaxDSat(X', Y, Z \cup X, \Phi', \alpha_0', \Psi')$ \;
  \If{$H_i' \not= H_i$ for some $i\in[1,n]$}{
    $i_0 \leftarrow \Choose( \{ i\in[1,n] ~|~ H_i' \not= H_i \})$ \; \label{alg:inc:i}
     $u \leftarrow \Choose( H_{i_0} \setminus H'_{i_0})$ \; \label{alg:inc:u}
     $\alpha_0' \leftarrow \alpha'^*$ \;
     $\Psi' \leftarrow \bot$ \;
     \ForEach{$m \in \MExp{H'_{i_0}}$}{
       $X' \leftarrow (X' \setminus \{ \xvar{i_0}{m} \}) \cup \{\xvar{i_0}{mu},\xvar{i_0}{m\bar{u}}\}$ \;
       $\Phi' \leftarrow \Phi' [ \xvar{i_0}{m} \mapsto (\xvar{i_0}{mu} \wedge u) \vee (\xvar{i_0}{m\bar{u}} \wedge \bar{u})]$ \;
       $\alpha'_0 \leftarrow (\alpha_0' \setminus \{\xvar{i_0}{m} \mapsto \_ \}) \cup
         \{ \xvar{i_0}{mu}, \xvar{i0}{m\bar{u}} \mapsto \alpha_0'(\xvar{i_0}{m}) \}$ \;
       $\Psi' \leftarrow \Psi' \vee (\xvar{i_0}{mu} \not\Leftrightarrow \xvar{i_0}{m\bar{u}})$ \;
       }
     $\Psi' \leftarrow \Psi' \vee \bigwedge\nolimits_{x \in X'} (x \Leftrightarrow \alpha'_0(x))$ \;
     $H'_{i_0} \leftarrow H'_{i_0} \cup \{u\}$
  }  
  }
  $\sigma_X^* \leftarrow \{ x_i \mapsto \vee_{m \in \MExp{H_i}} (\alpha'^*(\xvar{i}{m}) \wedge m) \}_{i\in[1,n]}$ \label{alg:incremental:final} \;
  \caption{Incremental Algorithm\label{alg:incremental}}
\end{algorithm}

Finally, note that the application of substitution at line 15 can be done
such that to preserve the CNF form of $\Phi'$.
That is, the substitution proceeds clause by clause by using the following equivalences, for every formula $\psi$:
\begin{align*}
  (\psi \vee \xvar{i_0}{m})[ \xvar{i_0}{m} \mapsto (\xvar{i_0}{mu} \wedge u) \vee (\xvar{i_0}{m\bar{u}} \wedge \bar{u})] & \Leftrightarrow \\
  & \hspace*{-4cm}(\psi \vee \xvar{i_0}{mu} \vee \xvar{i_0}{m\bar{u}}) \wedge
  (\psi \vee \xvar{i_0}{mu} \vee \bar{u}) \wedge
  (\psi \vee \xvar{i_0}{m\bar{u}} \vee u) \\
  (\psi \vee \overline{\xvar{i_0}{m}})[ \xvar{i_0}{m} \mapsto (\xvar{i_0}{mu} \wedge u) \vee (\xvar{i_0}{m\bar{u}} \wedge \bar{u})] & \Leftrightarrow 
  (\psi \vee \overline{ \xvar{i_0}{mu}} \vee \bar{u})
  (\psi \vee \overline{ \xvar{i_0}{m\bar{u}}} \vee u)
\end{align*}
%
\begin{theorem}
Algorithm~\ref{alg:incremental} is correct for solving the \dProb{} problem (\ref{eq:problem}).
\end{theorem}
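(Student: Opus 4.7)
The plan is to prove correctness by induction on iterations of the main loop, maintaining two invariants on the quadruple $(X', \Phi', \alpha_0', \Psi')$: (I1) $(X', \Phi')$ is the Theorem~\ref{thm:reduce} encoding of the \dProb{} instance obtained from the original one by replacing each $H_i$ by the current $H_i' \subseteq H_i$, so that solving the \dPmd{} problem on $(X', Y, Z\cup X, \Phi')$ is equivalent to solving that restricted \dProb{}; (I2) the pair $(\alpha_0', \Psi')$ meets the two preconditions demanded by \texttt{max\#sat}, and the oracle's returned value $\alpha'^*$ encodes an optimum substitution for the current restricted \dProb{}. Termination is immediate since $\sum_i |H_i'|$ strictly increases at each iteration; once $H_i' = H_i$ for all $i$ the loop exits, the final $\alpha'^*$ encodes an optimum of the original problem, and line~\ref{alg:incremental:final} converts it to the expected substitution via the inverse of the encoding of Theorem~\ref{thm:reduce}.

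The base case (initialization before the loop) is straightforward: when $H_i' = \emptyset$, the only complete monomial over $H_i'$ is $\top$, and the clauses $x_i \Leftrightarrow \xvar{i}{\top}$ added to $\Phi$ yield exactly the Theorem~\ref{thm:reduce} encoding of \dProb{} without dependencies, confirming (I1); the filter $\Psi' = \top$ trivially validates (I2). For the inductive step, after the oracle has returned the optimum $\alpha'^*$ of the current encoding, the algorithm selects some $i_0$ and $u \in H_{i_0}\setminus H'_{i_0}$ and rewrites $\xvar{i_0}{m}\mapsto (\xvar{i_0}{mu}\wedge u) \vee (\xvar{i_0}{m\bar u}\wedge \bar u)$ throughout $\Phi'$. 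This rewriting reproduces the Theorem~\ref{thm:reduce} encoding for the refined dependency set $H'_{i_0}\cup\{u\}$, since distributing over $m\in\MExp{H'_{i_0}}$ turns the existing disjunction into $\vee_{m'\in\MExp{H'_{i_0}\cup\{u\}}}(\xvar{i_0}{m'}\wedge m')$, which reestablishes (I1). By construction, duplicating the value of each $\xvar{i_0}{m}$ into both $\xvar{i_0}{mu}$ and $\xvar{i_0}{m\bar u}$ makes the new $\alpha_0'$ encode the same substitution as the previous $\alpha'^*$, so $\alpha_0'$ remains a feasible and (previously optimal) starting point in the refined problem.

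The main obstacle is proving that the filter $\Psi'$ fulfills the precondition of \texttt{max\#sat}. The condition $\Psi'[\alpha_0']=\top$ holds because of the disjunct $\bigwedge_{x\in X'}(x\Leftrightarrow \alpha_0'(x))$ appended last. For the safety condition, take any $\alpha\nvDash \Psi'$: both disjuncts of $\Psi'$ must fail, which means (a) $\alpha(\xvar{i_0}{mu}) = \alpha(\xvar{i_0}{m\bar u})$ for every $m\in\MExp{H'_{i_0}}$ and (b) $\alpha\ne \alpha_0'$. Under (a), each subformula $(\xvar{i_0}{mu}\wedge u)\vee(\xvar{i_0}{m\bar u}\wedge \bar u)$ collapses to a value independent of $u$, so $\Phi'[\alpha]$ is semantically identical to $\Phi'_{\text{prev}}[\alpha_{\text{prev}}]$, where $\Phi'_{\text{prev}}$ is the previous iteration's objective and $\alpha_{\text{prev}}$ merges each split back to its common value. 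A straightforward bookkeeping argument shows that (a) and (b) together force $\alpha_{\text{prev}}\ne\alpha'^*_{\text{prev}}$, and by the inductive optimality of $\alpha'^*_{\text{prev}}$ for $\Phi'_{\text{prev}}$ we obtain $\cardinal{\exists Z.\,\exists X.\,\Phi'[\alpha]}{Y} \le \cardinal{\exists Z.\,\exists X.\,\Phi'[\alpha_0']}{Y}$, which is exactly the safety condition required by the oracle.
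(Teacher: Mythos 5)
Your proof is correct and follows essentially the same route as the paper's: an induction maintaining the invariant that each oracle call solves the \dPmd{} encoding of the \dProb{} problem restricted to the current dependency sets $H_1',\dots,H_n'$, with the key step being that any assignment violating $\Psi'$ collapses (by merging the split variables) to a non-optimal assignment of the previous problem. Your treatment of the filter's safety precondition is in fact spelled out in slightly more detail than the paper's, but the underlying argument is identical.
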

\begin{proof}
The algorithm terminates after $1+\sum_{i \in [1,n]} \size{H_i}$ oracle calls.
  Moreover, every oracle call solves correctly
the \dPmd{} problem corresponding to \dProb{} problem
\begin{equation} \nonumber
\max\nolimits^{H'_1}x_1.~ ...~ \max\nolimits^{H'_n} x_n.~  \counting Y.~ \exists Z.~ \Phi(X,Y,Z)
\end{equation}
This is an invariance property provable by induction.  It holds by
construction of $X'$, $\Phi'$, $\alpha_0'$, $\Psi'$ at the initial step.
Then, it is preserved from one oracle call to the next one i.e., $X'$ and
$\Phi'$ are changed such that to reflect the addition of the variable $u$
of the set $H'_{i_0}$.  The new initial assignement $\alpha_0'$ is obtained
\begin{inparaenum}[(i)]
\item by replicating the optimal value $\alpha'^{*}(\xvar{i_0}{m})$ to the newly
introduced $\xvar{i_0}{m u},\xvar{i_0}{m \bar{u}}$ variables derived from
$\xvar{i_0}{m}$ variable (line 16) and 
\item by keeping the optimal value $\alpha'^{*}(\xvar{i}{m})$ for other variables (line 11).
\end{inparaenum}
As such, for the new problem, the assignement $\alpha_0'$ has exactly the
same number of $Y$-projected models as the optimal assignement $\alpha'^*$
had on the previous problem.  The filter $\Psi'$ is built such that to
contain this new initial assignment $\alpha_0'$ (line 19) as well as any
other assignement that satisfies
$\xvar{i_0}{mu} \not\Leftrightarrow \xvar{i_0}{m\bar{u}}$ for some monomial
$m$ (lines 12, 17). This construction guarantees that, any assignment which
does not satisfy the filter $\Psi'$ reduces precisely to an assignment of
the previous problem, other than the optimal one $\alpha'^*$, and
henceforth at most as good as $\alpha'_0$ regarding the number of
$Y$-projected models. Therefore, it is a sound filter and can be used to
restrict the search for the new problem. The final oracle call corresponds
to solving the complete \dPmd{} problem~(\ref{eq:md:problem}) and it will
therefore allow to derive a correct solution to the initial \dProb{}
problem (\ref{eq:problem}). \squareforqed
\end{proof}

\begin{example}\label{ex:incremental}
  Let reconsider \cref{ex:running1}.  The
incremental algorithm will perform 3 calls to the \dPmd{} oracle.  The first
call corresponds to the problem
\begin{multline*}
\max \xvar{1}{\top}.~\counting y_1.~\counting y_2.~\exists z_1.~\exists z_2.~\exists x_1.~\\
(x_1 \Leftrightarrow y_1) \land (z_1 \Leftrightarrow y_1 \lor y_2) \land (z_2 \Leftrightarrow y_1 \land y_2) \land (x_1 \Leftrightarrow \xvar{1}{\top})
\end{multline*}
A solution found by the oracle is e.g., $\xvar{1}{\top} \mapsto \bot$ which has 2 projected models.
If $z_1$ is added to $H'_1$, the second call corresponds to the refined \dPmd{} problem:
\begin{multline*}
\max \xvar{1}{z_1}.~ \max \xvar{1}{\bar{z_1}}~\counting y_1.~\counting y_2.~\exists z_1.~\exists z_2.~\exists x_1.~\\
(x_1 \Leftrightarrow y_1) \land (z_1 \Leftrightarrow y_1 \lor y_2) \land (z_2 \Leftrightarrow y_1 \land y_2) \land
  (x_1 \Leftrightarrow \xvar{1}{z_1} \wedge z_1 \vee \xvar{1}{\bar{z_1}} \wedge \bar{z_1})
\end{multline*}
A solution found by the oracle is e.g., $\xvar{1}{z_1} \mapsto
\top, \xvar{1}{\bar{z_1}}\mapsto \bot$ which has 3 projected models.
Finally, $z_2$ is added to $H'_1$ therefore the third call corresponds to
the complete \dPmd{} problem as presented in Example~\ref{ex:global}.
The solution found by the oracle is the same as in Example~\ref{ex:global}.
\end{example}

A first benefit of \cref{alg:incremental} is the fact that it opens the
door to any-time approaches to solve the \dProb{} problem.  Indeed, the
distance between the current and the optimal solution (that is, the
relative ratio between the corresponding number of  $Y$-projected models) can be estimated using the
upper bound provided by Prop.~\ref{prop:upper-bound}.  Hence, one could
stop the search at any given iteration as soon as some threshold is
reached and construct the returned value $\sigma_X$ similarly as
in \cref{alg:incremental:final} of \cref{alg:incremental}.  In this case
the returned $\sigma_X$ would be defined as $\sigma_X = \{ x_i
\mapsto \vee_{m \in \MExp{H_i'}} (\alpha'^*(\xvar{i}{m}) \wedge m) \}_{i\in[1,n]}$ (note here that
the monomials are selected from $H_i'$ instead of $H_i$).  

Another benefit of the incremental approach is that it is applicable without any assumptions on the
underlying \dPmd{} solver. Indeed, one can use $\Psi'$ in \cref{alg:incremental}
by solving the \dPmd{} problem corresponding to $\Phi' \land \Psi'$, and return the found
solution. Even though the $\alpha_0'$ parameter requires an adaptation of the \dPmd{} solver in order to
ease the search of a solution, one could still benefit from the incremental resolution of
\dProb{}. Notice that a special handling of the $\Psi'$ parameter by the solver would
avoid complexifying the formula passed to the \dPmd{} solver and still steer the search properly.

\section{Local method}\label{sec:local}
The local resolution method allows to compute the solution of an initial
\dProb{} problem by combining the solutions of two strictly smaller and
independent \dProb{} sub-problems derived syntactically from the initial
one.  The local method applies only if either 1) some counting or existential
variable $u$ is occurring in all dependency set; or 2) if there is some maximizing variable having an empty dependency set.  That is, in contrast
to the global and incremental methods, the local method is applicable
only in specific situations.

Let us consider a \dProb{} problem of form (\ref{eq:problem}). 
Given a variable $v$, let
$\Phi_v \isdef \Phi[v\mapsto \top]$,
$\Phi_{\bar{v}} \isdef \Phi[v \mapsto \bot]$ be the two cofactors on
variable $v$ of the objective $\Phi$.

\subsection{Reducing common dependencies}\label{subsec:local1}
 Let us consider now a
variable $u$ which occurs in all dependency sets $H_i$ and let
us consider the following
$u$-reduced \dProb{} problems:
\begin{eqnarray}
\max\nolimits^{H_1\setminus\{u\}} x_1.~... \max\nolimits^{H_n \setminus\{u\}} x_n.~\counting\,Y\setminus\{u\}.~\exists\,Z\setminus\{u\}.~\Phi_u \label{eq:problem:u}\\
\max\nolimits^{H_1\setminus\{u\}} x_1.~... \max\nolimits^{H_n \setminus\{u\}} x_n.~\counting\,Y\setminus\{u\}.~\exists\,Z\setminus\{u\}.~\Phi_{\bar{u}} \label{eq:problem:baru}
\end{eqnarray}
Let $\sigma^*_{X,u}$, $\sigma^*_{X,{\bar{u}}}$ denote respectively the solutions to the problems above.

\begin{theorem}\label{thm:local}
  If either
  \begin{enumerate}[(i)]
    \item $u\in Y$ or
    \item $u \in Z$ and $u$ is functionally dependent on counting variables $Y$ within the objective
      $\Phi$ (that is, for any valuation $\alpha_Y : Y \to \Boolean$, at most one of
      $\Phi[\alpha_Y][u \mapsto \top]$ and $\Phi[\alpha_Y][u \mapsto \bot]$ is satisfiable).
  \end{enumerate}
  then $\sigma_X^*$ defined as
  $$\sigma_X^*(x_i) \isdef \pars{u \wedge \sigma^*_{X,u}(x_i)} \vee \pars{\bar{u} \wedge \sigma^*_{X,\bar{u}}(x_i)} \mbox{ for all } i\in [1,n]$$
  is a solution to the \dProb{} problem (\ref{eq:problem}).
\end{theorem}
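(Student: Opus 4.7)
The plan is to reduce the theorem to a Shannon-expansion argument about the objective formula along the distinguished variable $u$, combined with a bijection between substitutions of the original problem and pairs of substitutions of the two reduced problems.

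First I would establish the cofactor decomposition: every formula $\varphi \in \BExp{H_i}$ containing $u$ can be uniquely written as $(u \wedge \varphi_u) \vee (\bar u \wedge \varphi_{\bar u})$ with $\varphi_u, \varphi_{\bar u} \in \BExp{H_i \setminus \{u\}}$, namely $\varphi_u \isdef \varphi[u\mapsto\top]$ and $\varphi_{\bar u} \isdef \varphi[u\mapsto\bot]$. Consequently, the map $\sigma_X \mapsto (\sigma_{X,u}, \sigma_{X,\bar u})$, where $\sigma_{X,u}(x_i) \isdef \sigma_X(x_i)[u\mapsto\top]$ and likewise for $\bar u$, is a bijection between admissible substitutions for the original problem and pairs of admissible substitutions for the two reduced problems. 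Moreover, for any such $\sigma_X$ the substitution commutes with the $u$-cofactor, so $\Phi[\sigma_X][u\mapsto\top] \Leftrightarrow \Phi_u[\sigma_{X,u}]$ and $\Phi[\sigma_X][u\mapsto\bot] \Leftrightarrow \Phi_{\bar u}[\sigma_{X,\bar u}]$.

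In case (i), where $u \in Y$, I would use this decomposition directly on the counting side: partitioning the $Y$-models according to the value of $u$ gives
\[
\cardinal{\exists Z.\Phi[\sigma_X]}{Y} \;=\; \cardinal{\exists Z.\Phi_u[\sigma_{X,u}]}{Y\setminus\{u\}} \;+\; \cardinal{\exists Z.\Phi_{\bar u}[\sigma_{X,\bar u}]}{Y\setminus\{u\}}.
\]
Since the two summands depend on the disjoint substitution components $\sigma_{X,u}$ and $\sigma_{X,\bar u}$, maximising the sum is equivalent to maximising each term independently, which is precisely what problems (\ref{eq:problem:u}) and (\ref{eq:problem:baru}) solve. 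Hence recombining $\sigma^*_{X,u}$ and $\sigma^*_{X,\bar u}$ via cofactor reconstruction yields the optimal $\sigma_X^*$.

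In case (ii), where $u \in Z$ is functionally determined by $Y$ in $\Phi$, I would Shannon-expand over the existential in $u$ instead, obtaining $\exists Z.\Phi[\sigma_X] \Leftrightarrow \exists (Z\setminus\{u\}).\bigl(\Phi_u[\sigma_{X,u}] \vee \Phi_{\bar u}[\sigma_{X,\bar u}]\bigr)$. The same additive decomposition as above then hinges on the two disjuncts having disjoint $Y$-projections. This is the main obstacle, and this is exactly where the functional-dependence hypothesis is needed: if $\Phi \models u \Leftrightarrow f(Y)$ for some Boolean function $f$, then substitution preserves this implication so $\Phi[\sigma_X] \models u \Leftrightarrow f(Y)$ as well; consequently for any fixed $\alpha_Y$ at most one of $\Phi_u[\sigma_{X,u}]$, $\Phi_{\bar u}[\sigma_{X,\bar u}]$ admits an extension on $Z\setminus\{u\}$, giving the required disjointness. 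Once disjointness is in place the argument concludes exactly as in case (i): the objective splits into an independent sum over $\sigma_{X,u}$ and $\sigma_{X,\bar u}$, and the cofactor-recombined $\sigma^*_X$ is optimal.
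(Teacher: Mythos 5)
Your proposal is correct and follows essentially the same route as the paper's proof: Shannon/cofactor decomposition of both the substitution and the objective along $u$, commutation of substitution with cofactoring, an additive split of the $Y$-projected model count justified by disjointness (trivially when $u \in Y$, via the functional dependence $\Phi \models u \Leftrightarrow f(Y)$ when $u \in Z$), and independent maximization of the two summands. The only difference is presentational: you separate the two hypotheses into explicit cases and spell out the disjointness argument for case (ii) a little more carefully than the paper does, which is a welcome clarification rather than a different method.
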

\begin{proof}First, any formula
$\varphi_i \in \BExp{H_i}$ can be equivalently written as
$u \wedge \varphi_{i,u} \vee \bar{u} \wedge \varphi_{i,\bar{u}}$ where
$\varphi_{i,u} \isdef \varphi_i[u \mapsto \top]\in \BExp{H_i \setminus\{u\}}$
and
$\varphi_{i,\bar{u}} \isdef \varphi_{i}[u \mapsto \bot] \in \BExp{H_i \setminus \{u\}}$.
Second, we can prove the equivalence:
\begin{eqnarray*}
\Phi[x_i \mapsto \varphi_i] & \Leftrightarrow &
(u \wedge \Phi_u \vee \bar{u} \wedge \Phi_{\bar{u}})[x_i \mapsto u \wedge \varphi_{i,u} \vee \bar{u} \wedge \varphi_{i,\bar{u}}] \\
& \Leftrightarrow & u \wedge \Phi_u[x_i \mapsto \varphi_{i,u}] \vee
 \bar{u} \wedge \Phi_{\bar{u}}[x_i \mapsto \varphi_{i,\bar{u}}]
\end{eqnarray*}
by considering the decomposition of $\Phi_{u}$, $\Phi_{\bar{u}}$ according to the variable $x_i$.
The equivalence above can then be generalized to a complete substitution 
$\sigma_X = \{ x_i \mapsto \varphi_i\}_{i \in [1,n]}$ of
maximizing variables.  Let us denote respectively $\sigma_{X,u} \isdef \{ x_i \mapsto \varphi_{i,u} \}_{i\in[1,n]}$,
$\sigma_{X, \bar{u}} \isdef \{x_i \mapsto \varphi_{i,\bar{u}}\}_{i \in [1,n]}$.  Therefore, one obtains
\begin{eqnarray*}
\Phi[\sigma_X] & \Leftrightarrow & (u \wedge \Phi_u \vee \bar{u} \wedge \Phi_{\bar{u}}) [x_i \mapsto \varphi_i]_{i\in [1,n]}\\
& \Leftrightarrow & u \wedge \Phi_u[x_i \mapsto \varphi_{i,u}]_{i\in[1,n]} \vee
 \bar{u} \wedge \Phi_{\bar{u}}[x_i \mapsto \varphi_{i,\bar{u}}]_{i\in[1,n]} \\
& \Leftrightarrow & u \wedge \Phi_u[\sigma_{X,u}] \vee \bar{u} \wedge \Phi_{\bar{u}}[\sigma_{X,\bar{u}}]
\end{eqnarray*}
Third, the later equivalence provides a way to compute the number of $Y$-models
of the formula $\exists Z.~\Phi[\sigma_Z]$ as follows:
\begin{eqnarray*}
\cardinal{\exists Z.~\Phi[\sigma_X]}{Y}
& = & \cardinal{\exists Z.~(u \wedge \Phi_u[\sigma_{X,u}] \vee
\bar{u} \wedge \Phi_{\bar{u}}[\sigma_{X,\bar{u}}])}{Y} \\
& = & \cardinal{\exists Z.~(u \wedge \Phi_u[\sigma_{X,u}]) \vee
\exists Z.~(\bar{u} \wedge \Phi_{\bar{u}}[\sigma_{X,\bar{u}}])}{Y} \\
& = & \cardinal{\exists Z.~(u \wedge \Phi_u[\sigma_{X,u}])}{Y} +
\cardinal{\exists Z.~(\bar{u} \wedge \Phi_{\bar{u}}[\sigma_{X,\bar{u}}])}{Y} \\
& = & \cardinal{\exists Z\setminus\{u\}.~\Phi_u[\sigma_{X,u}]}{Y\setminus\{u\}} +
\cardinal{\exists Z\setminus\{u\}.~\Phi_{\bar{u}}[\sigma_{X,\bar{u}}]}{Y \setminus\{u\}}
\end{eqnarray*}
Note that the third equality holds only because $u \in Y$ or 
$u \in Z$ and functionally dependent on counting variables $Y$.  Actually,
in these situations, the sets of $Y$-projected models of 
respectively, $u \wedge \Phi_u[\sigma_{X,u}]$ and
$\bar{u} \wedge \Phi_{\bar{u}}[\sigma_{X,\bar{u}}]$ are disjoint.  Finally,
  the last equality provides the justification of the theorem, that is, finding
$\sigma_X$ which maximizes the left hand side reduces to finding
$\sigma_{X,u}$, $\sigma_{X,\bar{u}}$ which maximizes independently the two
terms of right hand side, and these actually are the solutions of the two
$u$-reduced problems (\ref{eq:problem:u}) and
(\ref{eq:problem:baru}). \squareforqed
\end{proof}
%
%
\begin{example}
Let us reconsider \cref{ex:running1}.  It is an
immediate observation that existential variables $z_1$, $z_2$ are
functionally dependent on counting variables $y_1$, $y_2$ according to the
objective.  Therefore the local method is applicable and henceforth since
$H_1 = \{z_1, z_2\}$ one reduces the initial problem to four smaller
problems, one for each valuation of $z_1$, $z_2$, as follows:
\begin{eqnarray*}
z_1 \mapsto \top, z_2 \mapsto \top & :~ & \max\nolimits^{\emptyset} x_1.~ \counting y_1.~\counting y_2~.
(x_1 \Leftrightarrow y_1) \land (\top \Leftrightarrow y_1 \lor y_2) \land (\top \Leftrightarrow y_1 \land y_2) \\
z_1 \mapsto \top, z_2 \mapsto \bot & :~ & \max\nolimits^{\emptyset} x_1.~ \counting y_1.~\counting y_2~.
(x_1 \Leftrightarrow y_1) \land (\top \Leftrightarrow y_1 \lor y_2) \land (\bot \Leftrightarrow y_1 \land y_2) \\
z_1 \mapsto \bot, z_2 \mapsto \top & :~~ & \max\nolimits^{\emptyset} x_1.~ \counting y_1.~\counting y_2~.
(x_1 \Leftrightarrow y_1) \land (\bot \Leftrightarrow y_1 \lor y_2) \land (\top \Leftrightarrow y_1 \land y_2) \\
z_2 \mapsto \bot, z_2 \mapsto \bot & :~~ & \max\nolimits^{\emptyset} x_1.~ \counting y_1.~\counting y_2~.
(x_1 \Leftrightarrow y_1) \land (\bot \Leftrightarrow y_1 \lor y_2) \land (\bot \Leftrightarrow y_1 \land y_2) 
\end{eqnarray*}
The four problems are solved independently and have solutions e.g., respectively $x_1 \mapsto c_1 \in \{ \top \}$, $x_1 \mapsto c_2 \in \{\top, \bot\}$, $x_1 \mapsto c_3\in \{\top, \bot\}$, $x_1 \mapsto c_4 \in \{ \bot\}$.
By recombining these solutions according to \cref{thm:local} one obtains several solutions to the original \dProb{} problem of the form:
\begin{equation}\nonumber
x_1 \mapsto \pars{z_1 \wedge z_2 \wedge c_1} \vee
\pars{z_1 \wedge \bar{z_2} \wedge c_2} \vee
\pars{\bar{z_1} \wedge z_2 \wedge c_3} \vee
\pars{\bar{z_1} \wedge \bar{z_2} \wedge c_4}
\end{equation}
They correspond to solutions
already presented in \cref{ex:global}, that is:
\begin{align*}
x_1& \mapsto
\pars{z_1 \wedge z_2 \wedge \top} \vee
\pars{z_1 \wedge \overline{z_2} \wedge \bot} \vee
\pars{\overline{z_1} \wedge z_2 \wedge \bot} \vee
  \pars{\overline{z_1} \wedge \overline{z_2} \wedge \bot} &(\equiv z_1 \wedge z_2)  \\
x_1& \mapsto
\pars{z_1 \wedge z_2 \wedge \top} \vee
\pars{z_1 \wedge \overline{z_2} \wedge \bot} \vee
\pars{\overline{z_1} \wedge z_2 \wedge \top} \vee
  \pars{\overline{z_1} \wedge \overline{z_2} \wedge \bot} &(\equiv z_2) \\
x_1& \mapsto
\pars{z_1 \wedge z_2 \wedge \top} \vee
\pars{z_1 \wedge \overline{z_2} \wedge \top} \vee
\pars{\overline{z_1} \wedge z_2 \wedge \bot} \vee
  \pars{\overline{z_1} \wedge \overline{z_2} \wedge \bot} &(\equiv z_1) \\
x_1& \mapsto
\pars{z_1 \wedge z_2 \wedge \top} \vee
\pars{z_1 \wedge \overline{z_2} \wedge \top} \vee
\pars{\overline{z_1} \wedge z_2 \wedge \top} \vee
  \pars{\overline{z_1} \wedge \overline{z_2} \wedge \bot} &(\equiv z_1 \vee z_2)
\end{align*}
\end{example}
Finally, note that the local resolution method has potential for
parallelization.  It is possible to eliminate not only one but all
common variables in the dependency sets as long as they fulfill the
required property.  This leads
to several strictly smaller sub-problems that can be solved in parallel.
The situation has been already illustrated in the previous example,
where by the elimination of $z_1$ and $z_2$ one obtains 4 smaller
sub-problems.

\subsection{Solving variables with no dependencies}\label{subsec:local2}
 Let us consider now a
maximizing variable which has an empty dependency set.
Without lack of generality, assume $x_1$ has an empty dependency set, i.e. $H_1 = \emptyset$. Thus, the only possible values that can be assigned to $x_1$ are $\top$ or $\bot$.  
 Let us consider the following
$x_1$-reduced \dProb{} problems:
\begin{eqnarray*}
\max\nolimits^{H_2} x_2.~\dots ~\max\nolimits^{H_n} x_n.~\counting\,Y.~\exists\,Z.~\Phi_{x_1} \\
\max\nolimits^{H_2} x_2.~\dots ~\max\nolimits^{H_n} x_n.~\counting\,Y.~\exists\,Z.~\Phi_{{\overline{x_1}}}
\end{eqnarray*}

and let $\sigma^*_{X,x_1}$, $\sigma^*_{X,{\overline{x_1}}}$ denote respectively the solutions to the problems above.
The following \lcnamecref{proposition:local} is easy to prove, and provides the solution of the original problem
based on the solutions of the two smaller sub-problems.
\begin{proposition}\label{proposition:local}
 The substitution  $\sigma_X^*$ defined as
\[
  \sigma_X^*\isdef \begin{cases}
    \sigma^*_{X,x_1} \uplus \{x_1\mapsto \top\}  & \text{if } \cardinal{\exists Z.~ \Phi_{x_1}[\sigma^*_{X,x_1}]}{Y} \geq \cardinal{\exists Z.~ \Phi_{\overline{x_1}}[\sigma^*_{X,\overline{x_1}}]}{Y}\\
    \sigma^*_{X,\overline{x_1}} \uplus \{x_1\mapsto \bot\}& \text{otherwise }  \\
  \end{cases}
\]
  is a solution to the \dProb{} problem (\ref{eq:problem}).
\end{proposition}

\section{Application to Software Security}\label{sec:appsec}


In this section, we give a concrete application of \dProb{} in the context of \textit{software security}.
More precisely, we show that  finding an optimal strategy for an adaptative attacker trying to break the security of some program 
can be naturally encoded as specific instances of the \dProb{} problem.

In our setting, we allow the attacker to interact multiple times with the target program. Moreover,
we assume that the adversary is able to make \emph{observations}, either from the legal outputs or
using some side-channel leaks.  Adaptive attackers~\cite{Dullien17,saha2021incatksynth,phan2017sidechanatk} are a special form of active
attackers considered in security that are able to select their inputs based on former observations,
such that they maximize their chances to reach their goals (i.e., break some security properties).

First we present in more details this attacker model we consider, and then we focus on two representative attack 
objectives the attacker aims to maximize:
\begin{itemize}
  \item either the probability of reaching a specific point in the target program, while satisfying some objective function (Section \ref{appl1}), 
  \item or the amount of information it can get about some fixed secret used by the program (Section \ref{appl2}). 
\end{itemize}
At the end of the section, we show that the improvements presented in the previous sections apply
in both cases.

\subsection{Our model of security in presence of an adaptive adversary}\label{appl0}

The general setting we consider is the one of so-called \textit{active} attackers, 
able to provide \textit{inputs} to the program they target. Such attacks are then said \textit{adaptive} 
when the attacker is able to deploy an attack strategy, which continuously relies on some knowledge 
gained from previous interactions with the target program, and allowing to maximize its chances of success.
Moreover, we consider the more powerful attacker model where the adversary is assumed to know the code of the target
program.

Note that such an attacker model is involved in most recent concrete attack scenarios, where launching an exploit
or disclosing some sensitive data requires to chain several (interactive) attack steps in order to defeat some protections and/or 
to gain some intermediate privileges on the target platform\comm{\todo{add refs !}}. Obviously, from the defender side, quantitative
measures about the ``controllability'' of such attacks is of paramount importance for exploit analysis or vulnerability triage.

When formalizing the process of \emph{adaptatively attacking} a given program, one splits the
program's variables between those \emph{controlled} and those \emph{uncontrolled} by the attacker.
Among the \emph{uncontrolled} variables one further distinguishes those \emph{observable} and those \emph{non-observable}, 
the former ones being available to the attacker for producing its (next) inputs.
The \emph{objective} of the attacker is a formula, depending on the values of program variables, and determining whether 
the attacker has successfully conducted the attack.

For the sake of simplicity -- in our examples -- we restrict ourselves to non-looping sequential
programs operating on variables with bounded domains (such as finite integers, Boolean's, etc).  
We furthermore consider the programs are written in SSA form, assuming that each variable is assigned before it is used. 
These hypothesis fit well in the context of a code analysis technique like \textit{symbolic execution}~\cite{king1976symbolic}, 
extensively used in software security.

Finally, we also rely on explicit (user-given) annotations by predefined functions (or macros) to identify the different classes of program
variables and the attacker's objective.  
In the following code excerpts, we assume that:
\begin{itemize}
	\item The \texttt{random} function produces an uncontrolled non-observable value;  
it allows for instance to simulate the generation of both long term keys and nonces in a program using cryptographic primitives.
	\item The \texttt{input} function feeds the program with an attacker-controlled value.
	\item The \texttt{output} function simulates an observation made by the adversary and 
		denotes a value obtained through the evaluation of some expression of program variables.
\end{itemize}

\subsection{Security as a rechability property}\label{appl1}
We show in this section how to encode  \emph{quantitative reachability} defined in
\cite{bardin2022quantreach} as an instance of the  \dProb{} problem.

In \emph{quantitative reachability}, the goal of an adversary is to reach some target location in
some program such that some \emph{objective property} get
satisfied.  In order to model this target location of the program that the attacker wants to reach,
we extend our simple programming language with a distinguished \texttt{win} function. The
\texttt{win} function can take a predicate as argument (the objective  property) and is
omitted whenever this predicate is the \texttt{True} predicate. In practice such a predicate 
may encode some extra conditions required to trigger and exploit some vulnerability at the given program 
location (e.g., overflowing a buffer with a given payload).

\begin{example} \label{ex:sec:simpleatk}
In Program~\labelcref{alg:simpleatk} one can see an example of annotated program.  $y_1$
and $y_2$ are uncontrollable non-observable variables.  $z_1$ is an
observable variable holding the sum $y_1 + y_2$.  $x_1$ is a variable
controlled by the attacker.  The \emph{attacker's objective} corresponds to
the \emph{path predicate} \( y_1 \leq x_1 \) denoting
the condition to reach the \texttt{win} function call and the \emph{argument predicate} \( x_1 \leq y_2 \) denoting the \emph{objective property}.  Let us observe that a
successful attack exists, that is, by taking $x_1 \leftarrow \frac{z_1}{2}$ the
objective is always reachable.
\end{example}

\begin{algorithm}[t]
  \SetAlgorithmName{Program}{Program}{List of Programs}
  \DontPrintSemicolon
  \SetKwFunction{Input}{input}
  \SetKwFunction{Output}{output}
  \SetKwFunction{Random}{random}
  \SetKwFunction{Win}{win}
  $y_1 \leftarrow \Random{}$ \;
  $y_2 \leftarrow \Random{}$ \;
  \BlankLine{}
  $z_1 \leftarrow \Output(y_1 + y_2)$ \;
  $x_1 \leftarrow \Input{}$\;
  \BlankLine{}
  \If{$y_1 \leq x_1$}{
    \Win{$x_1 \leq y_2$}\;
  }
  \caption{A first program example\label{alg:simpleatk}}
\end{algorithm}


When formalizing adaptive attackers, the \emph{temporality} of interactions
(that is, the order of inputs and outputs) is important, as the attacker
can only synthesize an input value from the output values that were
observed \emph{before} it is asked to provide that input.  To track the
temporal dependencies in our formalization, for every controlled variable $x_i$
one considers the set $H_i$ of observable variables effectively known at
the time of defining $x_i$, that is, representing the accumulation of
attacker's knowledge throughout the interactions with the program.



We propose hereafter a systematic way to express the problem of synthesis
of an optimal attack (that is, with the highest probability of the objective property to get
satisfied), as
a \dProb{} instance.   Let $Y$ (resp. $Z$) be the set of
uncontrolled variables being assigned to \texttt{random()} which in this section is assumed to uniformly sample values in their domain (resp. other
expressions) in the program.  For a variable $z \in Z$ let moreover $e_z$ be
the unique expression assigned to it in the program, either through an
assignment of the form $z \leftarrow e_z$ or
$z \leftarrow \texttt{output}(e_z)$.
 Let $X = \{ x_1, ..., x_n\}$ be the set of controlled
variables with their temporal dependencies respectively subsets $H_1, \ldots , H_n\subseteq Z$ of uncontrollable variables.
Finally, let $\Psi$ be the attacker
objective, that is, the conjunction of the argument of the \texttt{win} function and the path predicate leading to the \texttt{win} function
call. Consider the next most likely generalized \dProb{} problem:
\begin{equation}
\max\nolimits^{H_1}x_1.~...~\max\nolimits^{H_n}x_n.~ \counting Y.~ \exists Z.~ \Psi \land \bigwedge\nolimits_{z \in Z} (z = e_z)
\end{equation}

\begin{example}\label{ex:sec:sum}
Consider the annotated problem from Program~\ref{alg:simpleatk}.
The encoding of the optimal attack leads to the generalized \dProb{} problem:
\begin{eqnarray*}
 \max\nolimits^{\{z_1\}}x_1.~ 
  \counting y_1.~ \counting y_2.~\exists z_1.~ 
 (y_1 \leq x_1 \land x_1 \leq y_2) \land (z_1 = y_1 + y_2 )
\end{eqnarray*}
\end{example}

Note that in contrast to the \dProb{} problem (\ref{eq:problem}), the variables are not restricted
to Booleans (but to some finite domains) and the expressions are not restricted to Boolean terms
(but involve additional operators available in the specific domain theories e.g., $=$, $\geq$, $+$,
$-$, etc). Nevertheless, as long as both variables and additional operators can be respectively,
represented by and interpreted as operations on bitvectors, one can use \emph{bitblasting} and
transforms the generalized problem into a full-fledged \dProb{} problem and then solve it by the
techniques introduced earlier in the paper.  

Finally, note also that in the \dProb{} problems constructed as above, the
maximizing variables are dependent by definition on existential variables only.
Therefore, as earlier discussed in \cref{sec:problem}, these problems
cannot be actually reduced to similar \dPds{} problems.  However, they
compactly encode the quantitative reachability properties subject to input/output dependencies.

\subsection{Security as a lack of leakage property}\label{appl2}

In this section, we extend earlier work on adaptive attackers from \cite{saha2021incatksynth} by
effectively synthesizing the \emph{strategy} the attacker needs to deploy in order to maximize its
knowledge about some secret value used by the program. Moreover, we show that in our case, we are
able to keep symbolic the trace corresponding to the attack strategy, while in
\cite{phan2017sidechanatk}, the attacker strategy is a concretized tree, which explicitly states,
for each concrete program output, what should be the next input provided by the adversary.
Following ideas proposed in \cite{phan2017sidechanatk}, symbolic execution can be used to generate 
constraints characterizing partitions on the secrets values, where each partition corresponds to the set of secrets leading to
the same \emph{sequences} of side-channel observations.

\noindent \begin{minipage}[t]{.5\textwidth}
\vspace{0pt}
\begin{algorithm}[H]
  \SetAlgorithmName{Program}{Program}{List of Programs}
  \DontPrintSemicolon
  \SetKwFunction{Input}{input}
  \SetKwFunction{Output}{output}
  \SetKwFunction{Random}{random}
  \SetKwFunction{Win}{win}
  $z \leftarrow \Random{}$  //the secret \;
  $x \leftarrow \Input{}$\;
  \BlankLine{}
  \eIf{$x \geq z$}{
    ... some computation taking 10 seconds \;
  }{
    ... some computation taking 20 seconds \;
  }
  \caption{A leaking program\label{twoL}}
\end{algorithm}
\end{minipage}
\hfill{}
\begin{minipage}[t]{.45\textwidth}
\vspace{0pt}
\begin{algorithm}[H]
  \SetAlgorithmName{Program}{Program}{List of Programs}
  \SetKwFunction{Input}{input}
  \SetKwFunction{Output}{output}
  \SetKwFunction{Random}{random}
  \SetKwFunction{Win}{win}
  $z \leftarrow \Random{}$ \;
  \BlankLine{}
  $x_1 \leftarrow \Input{}$\;
  $y_1 \leftarrow \Output(x_1 \geq z)$\;
  \BlankLine{}
  $x_2 \leftarrow \Input{}$\;
  $y_2 \leftarrow \Output(x_2 \geq z)$\;
  \BlankLine{}
  $x_3 \leftarrow \Input{}$\;
  $y_3 \leftarrow \Output(x_3 \geq z)$\;
  \caption{An iterated leaking program\label{out6}}
\end{algorithm}
\end{minipage}

\newcommand{\Concat}{\mathbin{{+}{+}}}

\begin{example} \label{ex:capacity6}
  Let us consider the excerpt Program \ref{twoL} taken from \cite{phan2017sidechanatk}. This program
	is not \textit{constant-time}, namely it executes a branching instruction whose condition depends on the secret $z$. Hence an adversary
  able to learn the branch taken during  the execution, either by measuring the time or doing some
  cache-based attack, will get some information about the secret $z$. A goal of an adversary
  interacting several times with the program could be to maximize the amount of information leaked about
  the secret value $z$. When the program is seen as a channel leaking information, the channel
  capacity theorem \cite{smith2009foundations} states that the information leaked by a program is
  upper-bounded by the number of different observable outputs of the program (and the maximum is
  achieved whenever the secret is the unique randomness used by the program). In our case, it means
  that an optimal adaptive adversary interacting $k$-times with the program should maximize the
  number of   different observable outputs. Hence, for example, if as in \cite{phan2017sidechanatk},
  we fix $k=3$ and if we assume that the secret $z$ is uniformly sampled in the domain $1\leq z \leq
  6$, then the optimal strategy corresponds to maximize the number of different observable outputs
  $y$ of the Program \ref{out6}, which corresponds to the following  \dProb{} instance: 
  \begin{multline*}
    \Max{\emptyset}{x_1} \Max{\{y_1\}}{x_2} \Max{\{y_1, y_2\}}{x_3} \counting y_1 .~ \counting y_2.~
    \counting y_3.~ \exists z ~. \\
    (y_1 \Leftrightarrow x_1 \geq z) \land (y_2 \Leftrightarrow x_2 \geq z) \land (y_3 \Leftrightarrow x_3 \geq z) \land (1 \leq z \leq 6)
  \end{multline*}
  Our prototype provided the following solution:
   $x_1 = 100,  $
    $x_2 =  y_1 1 0,  $
    $x_3 =  y_1 y_2 1,  $
  that basically says: the attacker should first input 4 to the program, then the input corresponding
  to the integer whose binary encoding is  $y_1$ concatenated with $10$, and the last input $x_3$ is
  the input corresponding to the integer whose binary encoding is the concatenation of $y_2$, $y_1$
  and $1$. In \cite{phan2017sidechanatk} the authors obtain an equivalent
  attack encoded as a tree-like strategy of concrete values. 
\end{example}

We now show a systematic way to express the problem of the synthesis
of an optimal attack expressed as the maximal channel capacity of a program seen as an information leakage channel, as a \dProb{} instance.   Contrary to the previous section, the roles of $Y$ and $Z$ are now switched:  $Y$ is a set of variables encoding the observables output by the program;   $Z$  is  the set of  variables uniformly sampled by \texttt{random()} or assigned to  other expressions in the program.  For a variable $y \in Y$, let  $e_y$ be
the unique expression assigned to it in the program through an
assignment of the form $y \leftarrow \texttt{output}(e_y)$. For a variable $z \in Z$, let moreover $e_z$ be the unique expression assigned to it in the program  through an
assignment of the form $z \leftarrow e_z$ or the constraint encoding the domain used to sample values in 
$z \leftarrow \texttt{random}()$.
 Let $X = \{ x_1, ..., x_n\}$ be the set of controlled
variables with their temporal dependencies respectively subsets $H_1, \ldots , H_n\subseteq Y$. Consider now the following most likely generalized \dProb{} problem:

\begin{equation*}
\max\nolimits^{H_1}x_1.~...~\max\nolimits^{H_n}x_n.~ \counting Y.~ \exists Z.~  \bigwedge\nolimits_{y \in Y} (y = e_y) \land \bigwedge\nolimits_{z \in Z} (z = e_z)
\end{equation*}

\comm{
\begin{example}[Maximizing channel capacity] \label{ex:capacity}
  Maximizing the channel capacity in example \missingref{} corresponds to the following \dProb{} instance:
  \begin{multline*}
    \Max{\emptyset}{x_1} \Max{\{y_1\}}{x_2} \Max{\{y_1, y_2\}}{x_3} \counting y_1 .~ \counting y_2.~
    \counting y_3.~ \exists z ~. \\
    (y_1 \Leftrightarrow x_1 \geq z) \land (y_1 \Leftrightarrow x_2 \geq z) \land (y_1 \Leftrightarrow x_3 \geq z)
  \end{multline*}

  For which the answer is:
  \begin{eqnarray*}
    x_1 &=& 100 \\
    x_2 &=& y_1 \Concat{} 10 \\
    x_3 &=& y_1 \Concat{} y_2 \Concat{} 1 \\
  \end{eqnarray*}
\end{example}
}

Finally, in contrast to reachability properties, in the \dProb{} problems
obtained as above for evaluating leakage properties, the maximizing
variables are by definition dependent on counting variables only.  Consequently, for
these problems, the existential variables can be apriori eliminated so that
to obtain an equivalent \dPds{}\footnotemark{} problem as discussed in \cref{sec:problem}.
\footnotetext{
  Actually these problems can even be reduced to \dPssat{} instances.
}

\subsection{Some remarks about the applications to security}\label{appl3}

\comm{
\retq{do we need this ? : After bit-blasting, the patterns in the resulting formula that correspond to specific operations in
the original formula (i.e. multiplications, additions) are not easily recoverable. Similarly,
patterns in the answer to the \dProb{} on the bit-blasted formula is not easily converted back to a
given set of operations in the original formula's language.}
}

Let us notice some interesting properties of the attacker
synthesis's \dProb{} problems.  If controlled variables $x_1$, $x_2$, ...,
$x_n$ are input in this order within the program then necessarily
$H_1 \subseteq H_2 \subseteq ... \subseteq H_n$.  That is, the knowledge of
the attacker only increases as long as newer observable values became
available to it.
Moreover, since we assumed that variables are used only after they were initialized, the sets $H_i$
contain observable variables that are dependent only on the counting variables $Y$.
Hence we can apply iteratively the following steps from the local resolution method described
in \cref{sec:local}:
\begin{itemize}
\item While $H_1 \not=\emptyset$, \comm{and  (that
is, now a static condition easy to check on the program)}  apply
 the local resolution method described
in \cref{subsec:local1} iteratively until $H_1$ becomes empty. For example, it is the case of \cref{ex:sec:simpleatk} where $z_1$ is dependent only on counting
variables $y_1$ and $y_2$.

\item When $H_1$ becomes $\emptyset$,   apply the  local resolution method described in \cref{subsec:local2} in order to eliminate the first maximizing variable. 
\end{itemize}

\section{Implementation and Experiments}\label{sec:bench}

\newcommand{\BaxMC}{\textsc{BaxMC}}
\newcommand{\Dmax}{\textsc{D4max}}
\newcommand{\DSSATpre}{\textsc{DSSATpre}}

We implement \cref{alg:incremental} leaving generic the choice of the underlying \dPmd{} solver.
For concrete experiments, we used both the approximate solver \BaxMC\footnotemark~\cite{vigouroux2022baxmc} and the
exact solver \Dmax{}~\cite{audemard2022d4max}.
\footnotetext{
Thanks to specific parametrization and the
oracles~\cite{chakraborty2013approxmc} used internally by \BaxMC{}, it can be considered an exact solver on the
small instances of interest in this section.
}

In the implementation of \cref{alg:incremental} in our tool, the filter $\Psi'$ is handled as discussed at
the end of \cref{sec:incremental}: the formula effectively solved is $\Phi' \land \neg \Psi'$,
allowing to use any \dPmd{} solver without any prior modification.
Remark that none of \BaxMC{} and \Dmax{} originally supported exploiting the $\alpha_0$ parameter of
\cref{alg:incremental} out of the box. While \Dmax{} is used of the shelf, we modified \BaxMC{} to actually support
this parameter for the purpose of the experiment.

We use the various examples used in this paper as benchmark instances for the implemented tool.
\cref{ex:running1,ex:running2} are used as they are. We furthermore use \cref{ex:running_old} (in
appendix) which is a slightly modified version of \cref{ex:running1}. We
consider \cref{ex:sec:sum,ex:capacity6} from \cref{sec:appsec} and perform the following steps to
convert them into \dProb{} instances:
\begin{inparaenum}[(i)]
  \item bitblast the formula representing the security problem into a \dProb{} instance over boolean
        variables;
  \item solve the later formula;
  \item propagate the synthesized function back into a function over bit-vectors for easier visual
        inspection of the result.
\end{inparaenum}

We also add the following security related problems (which respectively correspond to
Program~\labelcref{pgm:alternating} in appendix and a relaxed version of \cref{ex:capacity6} in
\cref{sec:appsec}) into our benchmark set:
\begin{example} \label{bench:guessbits}
  \( \Max{\emptyset}{x_1} \Max{\{z_1\}}{x_2} \Max{\{z_1, z_2\}}{x_3}
    \counting y_1.~ \exists z_1.~ \exists z_2.~ \) \\
    \hspace*{4cm} \( (x_3 = y_1) \land (z_1 = x_1 \geq y_1 \land z_2 = x_2 \geq y) \)
\end{example}
\begin{example} \label{bench:capacity}
    \( \Max{\emptyset}{x_1} \Max{\{y_1\}}{x_2} \Max{\{y_1, y_2\}}{x_3} \counting y_1 .~ \counting y_2.~
    \counting y_3.~ \exists z ~. \) \\
    \hspace*{4cm} \( (y_1 \Leftrightarrow x_1 \geq z) \land (y_2 \Leftrightarrow x_2 \geq z) \land (y_3 \Leftrightarrow x_3 \geq z) \)
\end{example}

When bitblasting is needed for a given benchmark, the number of bits used for bitblasting is
indicated in parentheses. After the bitblasting operation, the problems can be considered medium
sized.
\begin{table}[t]
  \centering
  \caption{Summary of the performances of the tool. $\size{\Phi}$ denotes the number of clauses. The
  last two columns indicate the running time using the specific \dPmd{} oracle.}
  \begin{tabular}{|l|r|r|r|r|r|r|}
    \hline{}
    Benchmark name & $\size{X}$ & $\size{Y}$ & $\size{Z}$ & $\size{\Phi}$ & Time (\BaxMC{}) & Time (\Dmax{}) \\ \hline \hline
    \cref{ex:running1} & 1 & 2 & 2 & 7 & 32ms & 121ms \\ \hline
    \cref{ex:running2} & 2 & 2 & 2 & 7 & 25ms & 134ms \\ \hline
    \cref{ex:running_old} & 1 & 2 & 1 & 5 & 16ms & 89ms \\ \hline \hline
    \cref{ex:sec:sum} (3 bits) & 3 & 6 & 97 & 329 & 378ms & 79.88s \\ \hline
    \cref{ex:sec:sum} (4 bits) & 4 & 8 & 108 & 385 & 638.63s & > 30mins \\ \hline
    \cref{ex:capacity6} (3 bits) & 9 & 3 & 150 & 487 & 18.78s & 74.58s \\ \hline
    \cref{bench:guessbits} (3 bits) & 9 & 3 & 93 & 289 & 74.00s & 18.62s \\ \hline
    \cref{bench:capacity} (3 bits) & 9 & 3 & 114 & 355 & 9.16s & 93.48s \\ \hline
  \end{tabular}
  \label{tab:perf}
\end{table}

As you can see in \cref{tab:perf}, the implemented tool can effectively solve all the examples
presented in this paper. The synthesized answers (i.e. the monomials selected in \cref{alg:incremental},
\cref{alg:incremental:final}) returned by both
oracles are the same.

For security examples, one key part of the process is the translation of the synthesized answer
(over boolean variables) back to the original problem (over bit-vectors). In order to do that, one
can simply concatenate the generated sub-functions for each bit of the bit-vector into a complete
formula, but that would lack explainability because the thus-generated function would be a
concatenation of potentially big sums of monomials. In order to ease visual inspection, we run a
generic simplification step~\cite{gario2015pysmt} for all the synthesized sub-function, before
concatenation. This simplification allows us to directly derive the answers explicited in
\cref{ex:sec:sum,ex:capacity6} instead of their equivalent formulated as sums of monomials, and
better explain the results returned by the tool.

Unfortunately, we could not compare our algorithm against the state-of-the-art \dPds{} solver
\DSSATpre{}~\cite{lee2021dssat} on the set of example described in this paper because
\begin{inparaenum}[(i)]
  \item as discussed in \cref{sec:hard:dssat}, some \dProb{} instances cannot be converted into
    \dPds{} instances,
  \item for the only \dProb{} instance (\cref{bench:capacity}) that can be converted into a \dPds{}
    instance, we were not able to get an answer using \DSSATpre{}.
\end{inparaenum}

\section{Related Work}\label{sec:relwork}
As shown in \cref{sec:problem}, \dProb{} subsumes the \dPds{} and \dPdq{} problems. This relation
indicates a similarity of the three problems, and 
thus some related works can be extracted from here. From the complexity point of view,
the decision version of \dProb{} can be shown to be $\mathsf{NEXPTIME}$-complete and hence it
lies in the same complexity class as  \dPdq{}~\cite{peterson2001dqbfcompl} and
\dPds{}~\cite{lee2021dssat}.

Comparing the performances of existing \dPdq{} algorithms with the proposed
algorithms for \dProb{} is not yet realistic since they address different
objectives.  However, one can search for potential improvements for
solving \dProb{} by considering the existing enhancements proposed
in~\cite{kovasznai2016statedqbf} to improve the resolution of \dPdq{}.
For example, dependency schemes~\cite{wimmer2016dqbfdeps} are a way to change
the dependency sets in \dPdq{} without changing the \emph{truth value}
compared to the original formula. Thus, adaptations of these dependency
schemes could be applied to our problem as well and potentially lead to a
significant decrease of the size of the resulting \dPmd{}
problems.


The \dPds{} problem is currently receiving an increased attention by the
research community.  A first sound and complete resolution procedure has
been proposed in \cite{luo2023dssatres}, however, without being yet
implemented.  The only available \dPds{} solver nowadays is 
\DSSATpre{}~\cite{cheng2023dssatpre}.  This tool relies on preprocessing to
get rid of dependencies and to produce equivalent \dPssat{} problems.  These
problems are then accurately solved by existing \dPssat{} solvers
\cite{chen2021claussat,wang2022elimssat}, some
of them being also able to compute the optimal assignments for maximizing
variables.  In contrast, our tool for solving \dProb{} relies on existing
\dPmd{} solvers, always synthesizes the assignments for maximizing
variables and provide support for approximate solving.  Moreover, due to
the presence of existential variables, note that \dProb{} and \dPds{} are
fundamentally different problems.  Existential variables are already pinpointing the
difference between the two pure counting problems \Problem{\#SAT}
and \Problem{\#$\exists$SAT}~\cite{aziz2015projmc,lagniez2019projmc}.  In cases where
maximizing variables depend on existential variables no trivial reduction
from \dProb{} to \dPds{} seems to exists.


From the security point of view, the closest works to our proposal are the ones decribed
in~\cite{phan2017sidechanatk,saha2021incatksynth}. As the authors in these papers, we are able  to
effectively synthesize the optimal adaptive strategy the attacker needs to deploy in order to
maximize its knowledge about some secret value used by the program. In addition,  we show that in
our case, we are able to keep symbolic the trace corresponding to the attack strategy, while in
\cite{phan2017sidechanatk}, the attacker strategy is a concretized tree which explicitely states,
for each concrete program output, what should be the next input provided by the adversary.

\section{Conclusions}\label{sec:conclusions}
We exposed in this paper a new problem called \dProb{} that subsumes both \dPdq {} and \dPds{}. We
then devised three different resolution methods based on reductions to \dPmd{} and showed the
effectiveness of one of them, the incremental method, by implementing a prototype solver. 
A concrete application of \dProb{} lies in the context of software security, in order to assess the robustness 
of a program by synthesizing the optimal adversarial strategy of an adaptive attacker.

Our work can be expanded in several directions. First, we would like to enhance our prototype with strategies for dependency
expansion in the incremental algorithm. Second, we plan to integrate the local resolution
method in our prototype. Third, we shall apply these techniques
on more realistic security related examples, and possibly getting 
further improvement directions from this dedicated context.

\bibliographystyle{splncs04}
\bibliography{./biblio.bib}

\appendix{}
\section*{Appendix}

\begin{example}\label{ex:running_old}
  Consider the problem:
  \begin{equation}\nonumber
  \max\nolimits^{\{z_1\}}x_1.~ \counting y_1.~\counting y_2.~ \exists z_1.~ (x_1 \Leftrightarrow y_1) \land (z_1 \Leftrightarrow (y_1 \lor y_2))
  \end{equation}
  Let $\Phi_1$ denote the objective formula.  As $\BExp{\{z_1\}} = \{\top,
  \bot, z_1, \overline{z_1} \}$ one shall consider these four possible
  substitutions for the maximizing variable $x_1$ and compute the
  associated number of $\{y_1,y_2\}$-projected models.  For instance,
  $\Phi_1[x_1 \mapsto \bot] \equiv \overline{y_1} \land (z_1
  \Leftrightarrow y_2)$ has two models, respectively $\{y_1\mapsto \bot,
  y_2\mapsto \top, z_1 \mapsto \top\}$ and $\{y_1 \mapsto\bot, y_2\mapsto
  \bot, z_1\mapsto \bot\}$ and two $\{y_1,y_2\}$-projected models
  respectively $\{y_1\mapsto \bot, y_2\mapsto \top\}$ and $\{y_1
  \mapsto\bot, y_2\mapsto \bot\}$.  Therefore $\cardinal{\exists
    z_1.~\Phi_1[x_1 \mapsto \bot]}{\{y_1, y_2\}} = 2$.  The maximizing
  substitution is $x_1 \mapsto z_1$ which has three $\{y_1,
  y_2\}$-projected models, that is $\cardinal{\exists z_1.~\Phi_1[x_1
      \mapsto z_1]}{\{y_1, y_2\}} = 3$.  Note that no substitution for
  $x_1$ exists such that the objective to have four $\{y_1,y_2\}$-projected
  models, that is, always valid for counting variables.
\end{example}

\begin{example}
In Program~\labelcref{pgm:alternating}, one shall know that the optimal strategy is the dichotomic search of $y_1$
  within its possible values.
\begin{algorithm}
  \SetAlgorithmName{Program}{Program}{List of Programs}
  \SetKwFunction{Input}{input}
  \SetKwFunction{Output}{output}
  \SetKwFunction{Random}{random}
  \SetKwFunction{Win}{win}
  $y_1 \leftarrow \Random{}$ \;
  \BlankLine{}
  $x_1 \leftarrow \Input{}$\;
  $z_1 \leftarrow \Output(x_1 \geq y_1)$\;
  \BlankLine{}
  $x_2 \leftarrow \Input{}$\;
  $z_2 \leftarrow \Output(x_2 \geq y_1)$\;
  \BlankLine{}
  $x_3 \leftarrow \Input{}$\;
  \Win{$x_3 \approx^{msb}_{3} y_1$}\;
  \caption{A second program example\label{pgm:alternating}}
\end{algorithm}
\end{example}

\end{document}